\definecolor{bleudefrance}{rgb}{0.19, 0.55, 0.91}
\definecolor{ao(english)}{rgb}{0.0, 0.5, 0.0}
\newcommand{\addcite}[0]{\ifthenelse{\boolean{showcomments}}
{\textcolor{purple}{(add cite(s)) }}{}}%
\newcommand{\enrique}[1]{  \ifthenelse{\boolean{showcomments}}
{\todo[inline,color=bleudefrance]{Enrique: #1}}{}}
\newcommand{\emmargin}[1]{\ifthenelse{\boolean{showcomments}}{\marginpar{\color{bleudefrance}\tiny EM: #1}}{}}
\newcommand{\aem}[1]{
\ifthenelse{\boolean{showedits}}
{\added[id=EM]{#1}}
{#1}
}
\newcommand{\chem}[2]{
\ifthenelse{\boolean{showedits}}
{\replaced[id=EM]{#1}{#2}}
{#1}
}
\newcommand{\dem}[1]{
\ifthenelse{\boolean{showedits}}
{\deleted[id=EM]{#1}}
{}
}
\newtheorem{thm}{Theorem}
\newtheorem{lem}{Lemma}
\newtheorem{exmp}{Case}
\newenvironment{usethmcounterof}[1]{%
  \thm}{\endthm\addtocounter{thm}{-1}}
\theoremstyle{remark}
\newtheorem*{rem}{Remark}
\begin{document}

\title{\LARGE Accurate Reduced-Order Models for Heterogeneous Coherent Generators}


\author{Hancheng Min, Fernando Paganini, and Enrique Mallada\thanks{H. Min and E. Mallada are with the Department of Electrical and Computer Engineering, Johns Hopkins University, Baltimore, MD 21218, USA {\tt\small \{hanchmin, mallada\}@jhu.edu}; F. Paganini is with Universidad ORT Uruguay, Montevideo, Uruguay {\tt\small paganini@ort.edu.uy}}
}

\maketitle

\begin{abstract}
We introduce a novel framework to approximate the aggregate frequency dynamics of coherent generators. By leveraging recent results on dynamics concentration of tightly connected networks, and frequency weighted balanced truncation, a hierarchy of reduced-order models is developed. This hierarchy provides increasing accuracy in the approximation of the aggregate system response, outperforming existing aggregation techniques. 
\end{abstract}

\section{Introduction}
Assessing performance in power grid frequency control requires models which are both accurate and tractable. In large-scale networks this goal has been sought for decades through \emph{aggregation} based on \emph{coherency} ~\cite{Chow2013}. Generally speaking, a group of generators is considered coherent if their bus frequencies exhibit a similar response when subject to power disturbances. These generators are often subsequently modeled by a single effective machine.   

Various methods for identifying coherent group of generators have been introduced in the past~\cite{podmore1978identification,Souza1992efficient,chow1982time,Winkelman1981,Nath1985}. The Linear Simulation Method~\cite{Podmore2013} groups generators whose maximum difference in time-domain response is within some tolerance. Similarly,  \cite{Souza1992efficient} develops a clustering algorithm based on the pairwise maximum difference in time-domain response. The Weak Coupling Method~\cite{Nath1985} quantifies strength of coupling between two areas to iteratively determine the boundaries of coherent generator groups. The Two Time Scale Method~\cite{chow1982time,Winkelman1981} computes the eigen-basis matrix associated with the electromechanical modes in the linearized network: generators with similar entries on the basis matrix with respect to low frequency oscillatory modes are considered coherent. 

Once generators are grouped by coherence, an effective machine model is typically proposed for each group. Previous work~\cite{Anderson1990,Germond1978,Guggilam2018,Apostolopoulou2016,ourari2006dynamic,Paganini2019tac} suggests that inertial and damping coefficients for the effective generator should chosen as the sum of the corresponding generator parameters. However, in the presence of turbine control dynamics, the proper choice of turbine time constants is unclear. Optimization-based approaches~\cite{Germond1978,Guggilam2018}  minimize an error function to choose the time constant of the effective generator. Other approaches use the average~\cite{Apostolopoulou2016}, or the weighted harmonic mean~\cite{ourari2006dynamic} of time constants of generators in the coherent group. \color{black} Accurate models of the coherent dynamics play an important role in applications to area dynamics modeling~\cite{ourari2006dynamic}, optimization of DER participation~\cite{Guggilam2018}, frequency shaping control~\cite{jiang2020grid}. Moreover, new modeling demands arise in modern-day networks where coherent groups may include grid-forming inverters\cite{schiffer2014conditions,tegling2015performance} in addition to classical synchronous generators. \color{black}

In this paper, we leverage new results \cite{min2019cdc} on characterizing coherence in tightly-connected networks to propose a general framework for aggregation of coherent generators. For $n$ coherent generators with transfer function $g_i(s),\ i=1,\cdots,n$, the aggregate coherent dynamics are accurately approximated by $\hat{g}(s)=\lp \sum_{i=1}^ng_i^{-1}(s)\rp^{-1}$. In particular, we show that $\hat{g}(s)$ is a natural characterization of the coherent dynamics in the sense that, as the algebraic connectivity of the network increases, the response of the coherent group is asymptotically $\hat{g}(s)$. Note, however, than in general due to heterogeneity in turbine control dynamics, the aggregate transfer function $\hat{g}(s)$ will be of an order which scales with the network size. We thus seek a low-order approximation.

In contrast with the conventional approach \cite{Germond1978,Guggilam2018,ourari2006dynamic} we will not restrict the choice of low order models to the simple selection of parameters of an effective generator. 
Rather, we will resort to \emph{frequency weighted balanced truncation} to develop a hierarchy of models of adjustable order and increasing accuracy. In particular, for an aggregation of $n$ second order generator models, we find that high accuracy can often be achieved by a reducing the $2n$-order system to 3rd order. We note however that the aggregation techniques introduced in this paper apply to any linear model of generators, including those of higher order than two.

We compare two alternatives: providing an aggregate model for a set of turbines, and subsequently closing the loop, versus performing the reduction directly on the closed loop  $\hat{g}(s)$. The first is motivated by retaining the interpretation whereby one or two equivalent generators represent the aggregate; still, we show how a similar interpretation may be available for the second, more accurate method. 

The rest of the paper is organized as follows. In Section \ref{sec_2_aggr_dym}, we provide the theoretical justification of the coherent dynamics $\hat{g}(s)$. In Section \ref{sec_3_red_ord_mdl}, we propose reduced-order models for $\hat{g}(s)$ by frequency weighted balanced truncation. We then show via numerical illustrations that the proposed models can achieve accurate approximation (Section \ref{sec_4_sim}). Lastly, we conclude this paper with more discussions on the implications of our current results. \color{black} A preliminary one-and-half page abstract of this work was presented in~\cite{min2019allerton}.\color{black}

\section{Aggregate Dynamics of Coherent Generators}\label{sec_2_aggr_dym}




Consider a group of $n$ generators, indexed by $i=1,\cdots,n$ and dynamically coupled through an AC network. Assuming the network is in steady-state, Fig.\ref{blk_power} shows the block diagram of the linearized system around its operating point.

\begin{figure}[h!]
    \centering
    \includegraphics[width=4.5cm]{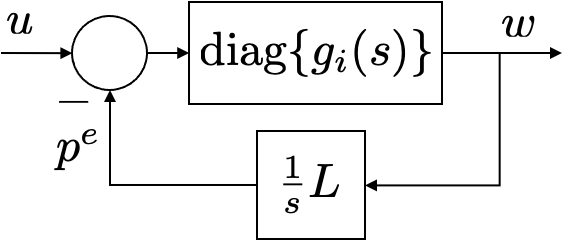}
    \caption{Block Diagram of Linearized Power Networks}
    \label{blk_power}
\end{figure}
\color{black} Due to the space constraints, we refer to~\cite{zhao2013power} for details on the linearization procedure. The signals $w=[w_1,\cdots,w_n]^T,u=[u_1,\cdots,u_n]^T,p^e=[p^e_1,\cdots,p^e_n]^T$ are in vector form. \color{black} For generator $i$, the transfer function from net power deviation $(u_i-p^e_i)$ at its generator axis to its angular frequency deviation $w_i$, relative to their equilibrium values, is given by $g_i(s)$. The net power deviation at generator $i$, includes disturbance $u_i$ reflecting variations in mechanical power or local load, minus the electrical power $p^e_i$ drawn from the network.

The network power fluctuations $p^e$ are given by a linearized (lossless) DC model of the power flow equation $p^e(s) = \frac{1}{s}Lw(s)$. Here $L$ is the Laplacian matrix of an undirected weighted graph, with its elements given by
$L_{ij} = \left.\frac{\partial}{\partial \theta_j}\sum_{k=1}^n|V_i||V_k|b_{ik}\sin(\theta_i-\theta_k)\right\vert_{\theta=\theta_0}\,,$
where $\theta_0$ are angles at steady state, $|V_i|$ is the voltage magnitude at bus $i$ and $b_{ij}$ is the line susceptance. Without loss of generality, we assume the steady state angular difference $\theta_{0i}-\theta_{0j}$ across each line is smaller than $\frac{\pi}{2}$. Moreover, because $L$ is a symmetric real Laplacian, its eigenvalues are given by $0=\lambda_1(L)\leq \lambda_2(L)\leq \cdots\leq \lambda_n(L)$.
The overall linearized frequency dynamics of the generators is given by
\begin{subequations}\label{eq:system}
\begin{align}
    w_i(s) &=\; g_i(s)(u_i(s)-p_i^e(s)),\quad i=1,\cdots,n\,,\label{eq_gen_dym}\\
    p^e(s)&=\; \frac{1}{s}Lw(s)\,.
\end{align}
\end{subequations}

\color{black}
Generally, a group of generator coupled as in Fig. \ref{blk_power} is considered \emph{coherent} if all generators have the same/similar frequency responses under disturbance $u$ of any shape. We are interested in characterizing the dynamic response of coherent generators, which we term here \emph{coherent dynamics}.  With this aim, we seek conditions on the network \eqref{eq:system} under which the entire set of generators behave coherently. The same approach can be used on subgroups of generators.

To motivate our results, we start with summing over all equations in \eqref{eq_gen_dym} to get
\be
    \sum_{i=1}^ng_i^{-1}(s) w_i(s) = \sum_{i=1}^nu_i(s)-\sum_{i=1}^np_i^e(s)=\sum_{i=1}^nu_i(s)\,.\label{eq_aggr_dym_m1}
\ee
Notice that the term $\sum_{i=1}^np_i^e(s)=\one^T\frac{L}{s}w(s)=0$ since $\one=[1,\cdots,1]^T$ is an left eigenvector of $\lambda_1(L)=0$. 

A pragmatical approach to obtain a model of coherent behavior is to simply \emph{impose} the equality  $w_i(s)=\hat w(s)$ between the frequency output. Solving from \eqref{eq_aggr_dym_m1} we obtain:
\be
    \hat{w}(s)=\lp\sum_{i=1}^ng_i^{-1}(s)\rp^{-1}\sum_{i=1}^nu_i(s)=:\hat{g}(s)\sum_{i=1}^nu_i(s);\,
    \label{eq_aggr_dym}
\ee
the group of generators is aggregated into a single effective machine $\hat{g}(s)$, responding to the total disturbance. 
\color{black}


\subsection{Coherence in Tightly Connected Networks}\label{ssec:coherence}

To properly justify the use of  \eqref{eq_aggr_dym} as an accurate descriptor of the coherent dynamics, we state here a precise result.  Our analysis will highlight the role of the algebraic connectivity $\lambda_2(L)$ of the network as a direct indicator of how coherent a group of generators is.

For the network shown in Fig.\ref{blk_power}, the transfer matrix from the disturbance $u$ to the frequency deviation $w$ is given by
\be
    T(s) = \lp I_n+\dg\{g_i(s)\}L/s\rp^{-1}\dg\{g_i(s)\}\,,
    \label{eq_T_explicit}
\ee
where $I_n$ is the $n\by n$ identity matrix. We establish that the transfer matrix $T(s)$ converges, as algebraic connectivity $\lambda_2(L)$ increases,  to one where all entries are given by $\hat g(s)$.

\color{black}

We make several assumptions: 1) $T(s)$ is stable; 2) $\hat{g}(s)$ in \eqref{eq_aggr_dym} is stable 3) all $g_i(s)$ are minimum phase systems. All generator network models discussed in this paper (Section \ref{subsec:aggr_dym},\ref{subsec:aggr_dym_mix}) satisfy these assumptions. In particular, the stability of $T(s)$ is guaranteed by passivity of the network~\cite{khalil2002nonlinear}. We state the following result.

\color{black}
\begin{thm}\label{thm_unifm_conv}
    Given the assumptions above, the following holds for any $\eta_0>0$:
    \ben
        \lim_{\lambda_2(L)\ra +\infty}\sup_{\eta\in[-\eta_0,\eta_0]}\lV T(j\eta)-\hat{g}(j\eta)\one\one^T\rV=0\,,
    \een
    where $j=\sqrt{-1}$ and $\one\in\mathbb{R}^n$ is the vector of all ones.
\end{thm}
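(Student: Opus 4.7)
The plan is to obtain a closed-form expression for $T(s)$ via the Woodbury matrix identity applied to the eigen-decomposition of $L$, take the limit $\lambda_2(L)\to\infty$ entrywise, and identify the resulting limit as $\hat g(s)\one\one^T$ through a Schur-complement computation; the passage from pointwise to uniform convergence on $[-\eta_0,\eta_0]$ then comes for free by compactness and continuity.

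First, rewrite $T(s)=\bigl(G^{-1}(s)+L/s\bigr)^{-1}$ with $G(s):=\mathrm{diag}\{g_i(s)\}$. Let $U=[\,\one/\sqrt n,\ V\,]\in\mathbb{R}^{n\times n}$ be orthogonal, where the columns of $V\in\mathbb{R}^{n\times(n-1)}$ are orthonormal eigenvectors of $L$ for its nonzero eigenvalues, so that $L=V\Lambda_0 V^T$ with $\Lambda_0=\mathrm{diag}(\lambda_2(L),\dots,\lambda_n(L))$ and $V^T\one=0$. The Woodbury identity then gives
\begin{equation*}
T(s)=G(s)-G(s)V\bigl(s\Lambda_0^{-1}+V^TG(s)V\bigr)^{-1}V^TG(s).
\end{equation*}

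For $s=j\eta$ with $\eta\in[-\eta_0,\eta_0]$, every diagonal entry of $s\Lambda_0^{-1}$ is bounded in magnitude by $\eta_0/\lambda_2(L)$, so $s\Lambda_0^{-1}\to 0$ uniformly in $\eta$ as $\lambda_2(L)\to\infty$. Continuity of $G(j\eta)$ on the compact interval, together with the standing minimum-phase and stability hypotheses, will ensure that $V^TG(j\eta)V$ is invertible with uniformly bounded inverse; standard perturbation of the inverse then yields $\bigl(s\Lambda_0^{-1}+V^TGV\bigr)^{-1}\to\bigl(V^TGV\bigr)^{-1}$ uniformly in $\eta$, and hence
\begin{equation*}
T(j\eta)\longrightarrow Z(j\eta):=G(j\eta)-G(j\eta)V\bigl(V^TG(j\eta)V\bigr)^{-1}V^TG(j\eta)
\end{equation*}
uniformly on $[-\eta_0,\eta_0]$. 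It remains to show that $Z=\hat g\,\one\one^T$. Since $ZV=0$ and $Z^T=Z$ (because $G$ is diagonal), both the rows and columns of $Z$ lie in $\mathrm{span}\{\one\}$, so $Z=c\,\one\one^T$ with $c=\one^TZ\one/n^2$. To compute $c$, use the congruence
\begin{equation*}
U^TGU=\begin{pmatrix}a & b^T\\ b & C\end{pmatrix},\quad a=\tfrac{1}{n}\one^TG\one,\ \ b=\tfrac{1}{\sqrt n}V^TG\one,\ \ C=V^TGV,
\end{equation*}
and observe that the $(1,1)$ entry of $U^TG^{-1}U=(U^TGU)^{-1}$ equals $\tfrac{1}{n}\one^TG^{-1}\one=1/(n\hat g)$. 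The Schur-complement formula then gives $(a-b^TC^{-1}b)^{-1}=1/(n\hat g)$, i.e.\ $a-b^TC^{-1}b=n\hat g$; multiplying by $n$ yields $\one^TZ\one=n^2\hat g$, so $c=\hat g$, as desired.

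The main technical obstacle is securing uniform invertibility and uniform boundedness of $(V^TG(j\eta)V)^{-1}$ on the compact frequency window — needed both to write the Woodbury formula and to pass to the limit. Once this is established (via continuity of $G(j\eta)$ and the assumption that $\hat g$ is stable, hence finite, on the imaginary axis), the rest reduces to invoking the Woodbury and Schur-complement identities.
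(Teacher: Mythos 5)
Your proof is correct, but it takes a genuinely different route from the paper's. The paper does not use Woodbury: it writes $T(s)=\lp \mathrm{diag}\{g_i^{-1}(s)\}+L/s\rp^{-1}$, conjugates by the full orthogonal eigenbasis $[\one/\sqrt n,\ V_\perp]$, and inverts the resulting $2\times 2$ block matrix $H$ directly, bounding each block; the key point there is that the lower-right block $V_\perp^T\mathrm{diag}\{g_i^{-1}\}V_\perp+\tilde\Lambda/s$ is automatically invertible for large $\lambda_2(L)$ with $\|H_{22}^{-1}\|\le (|\lambda_2(L)/s|-M_2)^{-1}\to 0$, which yields an explicit quantitative rate of order $\eta_0/\lambda_2(L)$ and entirely sidesteps the invertibility question that you correctly flag as your main obstacle. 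Your route buys a cleaner structural identification of the limit ($ZV=0$ and $Z=Z^T$ force $Z=c\,\one\one^T$, then a Schur-complement/congruence computation gives $c=\hat g$) at the cost of that one extra lemma, and it does not produce an explicit convergence rate. The flagged step does go through under the paper's assumptions, but you should make the argument explicit rather than saying continuity ``will ensure'' it: by Cramer's rule applied to the invertible matrix $U^TG(j\eta)U$, one has $\det\lp V^TG(j\eta)V\rp=\tfrac{1}{n}\one^TG^{-1}(j\eta)\one\cdot\det\lp U^TG(j\eta)U\rp$, and the first factor equals $1/(n\hat g(j\eta))$, which is nonzero precisely because $\hat g$ is stable (finite on the imaginary axis) while $\det G(j\eta)\ne 0$ by the minimum-phase assumption; continuity and compactness of $[-\eta_0,\eta_0]$ then give the uniform bound on $\lp V^TG(j\eta)V\rp^{-1}$. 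A final minor point shared with the paper: at $\eta=0$ the expression $L/s$ is singular, but your formula degenerates gracefully to $T(0)=G-GV\lp V^TGV\rp^{-1}V^TG=\hat g(0)\one\one^T$, so nothing breaks.
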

\color{black}

The transfer matrix $\hat{g}(s)\one\one^T$ has the property that 
for an arbitrary vector disturbance $u(s)$, the response is
$w(s) = \hat{g}(s)\one\one^Tu(s) = \left(\hat g(s)\sum_{i=1}^nu_i(s)\right)\one$;  
this says the vector of bus frequencies responds in unison, with all entries equal to the response $\hat{w}$ in \eqref{eq_aggr_dym}. 
Theorem \ref{thm_unifm_conv} states that in the limit of large connectivity, the true response  $T(s)u(s)$  is approximated by the one in \eqref{eq_aggr_dym} for the disturbances in the frequency band $[0,\eta_0]$.
\ifthenelse{\boolean{archive}}{The proof is shown in the appendix.}{Due to space constraints, we refer to \cite{min2019aggr} for the proof.}




\begin{figure}[!h]
    \centering
    \includegraphics[height=3.3cm]{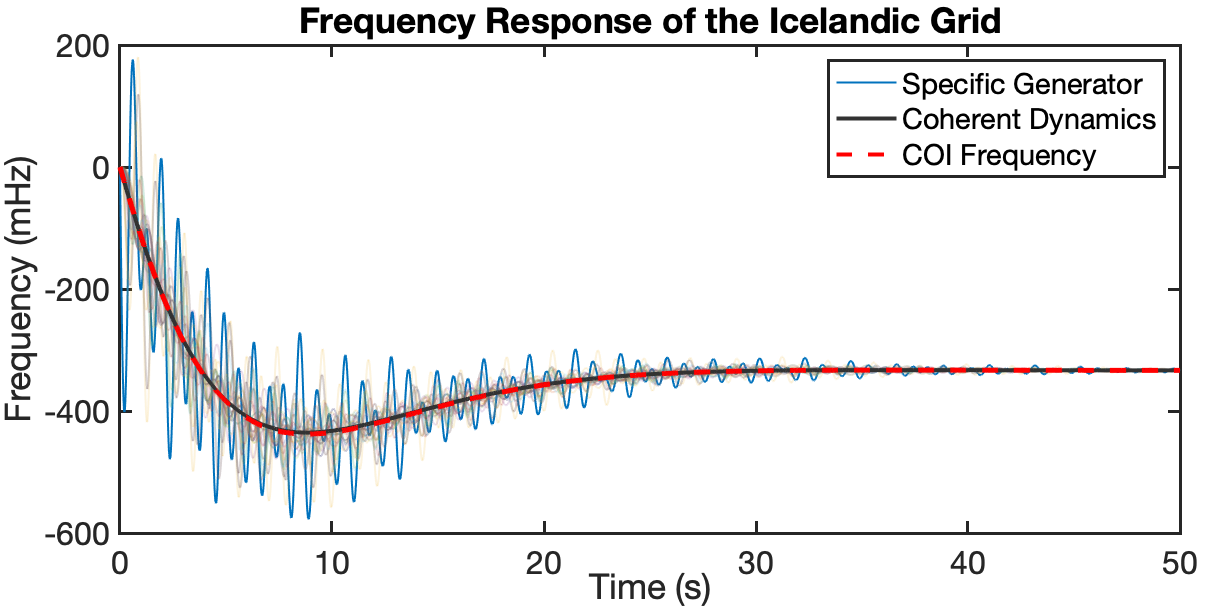}
    \caption{Step response of Icelandic grid (generator responses and the CoI frequency response), and step response of coherent dynamics $\hat{g}(s)$.  The  oscillatory response of a specific generator is highlighted in blue line.}
    \label{fig_approx_to_coi}
\end{figure}
The limit of high connectivity analyzed in the theorem is a good assumption for many cases of tightly connected networks, but one may wonder about the relevance of $\hat{g}(s)$ in a less extreme case. We explore this through a numerical  simulation on the Icelandic Power Grid~\cite{iceland}, of moderate connectivty. 
 As shown in Fig.\ref{fig_approx_to_coi}, the step response has incoherent oscillations from individual generators. Nevertheles,  if one looks at the Center of Inertia (CoI) frequency $w_{\mathrm{coi}}=(\sum_{i=1}^nm_i w_i)/(\sum_{i=1}^nm_i)$, a commonly used system-wide metric, we see it is very closely approximated by  the coherent dynamics $\hat{g}(s)$. Thus we will proceed with this model of aggregate response. For certain generator models, however, the  complexity of $\hat{g}(s)$ motivates the need for  approximations. 
 


\color{black}


\ifthenelse{\boolean{archive}}{As a side note, such coherence among generators is frequency-dependent. As we suggested above, the effective algebraic connectivity $\lv\frac{\lambda_2(L)}{s}\rv$ determines how close $T(s)$ is to $\hat{g}(s)\one\one^T$ at certain point. For any fixed $\lambda_2(L)$, there is a large enough cutoff frequency $\eta_c$ such that $\lv\frac{\lambda_2(L)}{j\eta}\rv$ is sufficiently small for any $\eta\geq \eta_c$, which is to say, for certain coherent group of generators, the responses of generators are not coherent at all under a disturbance with high frequency components over band $[j\eta_c,+\infty)$. }{}

\subsection{Aggregate Dynamics for Different Generator Models}\label{subsec:aggr_dym}

Having characterized how the \emph{coherent dynamics} given by $\hat{g}(s)$  represent the network's aggregate behavior, from now on we will use with no distinction the terms ``aggregate" and ``coherent" dynamics. Now we look into the explicit forms these dynamics take  for different generator models. 
\begin{exmp}\label{exmp_sw_mdl}
    Generators with 1st order model, of two types: 
    
    1) For synchronous generators\cite{Paganini2019tac}, 
    $
        g_i(s) =\frac{1}{m_is+d_i}\,,
    $
    where $m_i,d_i$ are the inertia and damping of generator $i$, respectively. The coherent dynamics are
    $
        \hat{g}(s) = \frac{1}{\hat{m}s+\hat{d}}\,,
    $
    where $\hat{m} = \sum_{i=1}^nm_i$ and $\hat{d} = \sum_{i=1}^nd_i$.
    
    2) For droop-controlled inverters\cite{schiffer2014conditions},
    $
        g_i(s)=\frac{k_{P,i}}{\tau_{P,i}s+1}\,,
    $
    where $k_{P,i}$ and $\tau_{P,i}$ are the droop coefficient and the filter time constant of the active power measurement, respectively. The coherent dynamics are
    $
        \hat{g}(s)=\frac{\hat{k}_P}{\hat{\tau}_Ps+1}\,,
    $
    where $\hat{k}_P=\lp\sum_{i=1}^nk_{P,i}^{-1}\rp^{-1},\ \hat{\tau}_P=\hat{k}_P\lp\sum_{i=1}^n\tau_{P,i}/k_{P,i}\rp$.
\end{exmp}
Notice that both dynamics are of the same form; by suitable reparameterization, we may use the ``swing" model $g_i(s) =\frac{1}{m_is+d_i}$ to model both types of generators. In this case no order reduction is needed: the aggregate model given in Case \ref{exmp_sw_mdl} is consistent with the conventional approach of choosing inertia $\hat m$ and damping $\hat d$ as the respective sums over all generators. Theorem \ref{thm_unifm_conv} explains why such a choice is indeed appropriate.

The aggregation is more complicated when considering generators with turbine droop control:
\begin{exmp}
    Synchronous generators given by the swing model with turbine droop\cite{Paganini2019tac}
    \be \label{eq_single_generator}
        g_i(s) = \frac{1}{m_is+d_i+\frac{r_i^{-1}}{\tau_is+1}}\,,
    \ee
    where $r_i^{-1}$ and  $\tau_i$ are the droop coefficient and turbine time constant of generator $i$, respectively. The coherent dynamics are given by 
    \be
        \hat{g}(s) = \frac{1}{\hat{m}s+\hat{d}+\sum_{i=1}^n\frac{r_i^{-1}}{\tau_is+1}}\,.\label{eq_aggr_dym_sw_tb}
    \ee
\end{exmp}
When all generators have the same turbine time constant $\tau_i=\hat \tau$, then $\hat{g}(s)$ in \eqref{eq_aggr_dym_sw_tb} reduces to the typical effective machine model of the form \eqref{eq_single_generator} with parameters $(\hat{m},\hat{d},\hat{r}^{-1},\hat{\tau}),$
where $\hat r^{-1}=\sum_{i=1}^nr_i^{-1}$, i.e., the aggregation model is still obtained by choosing parameters as the respective sums of their individual values. \color{black}However, if the $\tau_i$ are heterogeneous, then $\sum_{i=1}^n\frac{r_i^{-1}}{\tau_is+1}$ is generally high-order because the summands have distinct poles. As the result, the closed-loop dynamics $\hat{g}(s)$ is a high-order transfer function and cannot be accurately represented by a single generator model. \color{black} The aggregation of generators thus requires a low-order approximation of $\hat{g}(s)$.

\subsection{Aggregate Dynamics for Mixture of Generators}\label{subsec:aggr_dym_mix}
We have shown the aggregate dynamics for generators of three different types. When a mixture of these different types is present\footnote{Generally, when considering a mixture of synchronous generators and grid-forming inverters, our network model is valid only when synchronous generators make up a significant portion of the composition.}, we propose \eqref{eq_single_generator} to be a general representation of the three types; in particular, the first order models can be regarded as \eqref{eq_single_generator} with $r_i^{-1}=0$. Therefore, \eqref{eq_aggr_dym_sw_tb} provides a general representation of the aggregate dynamics resulting from a mixture of generators. Again, high-order coherent dynamics arise when heterogeneous turbines exist.

\section{Reduced Order Model for Coherent Generators with Heterogeneous Turbines}\label{sec_3_red_ord_mdl}

As shown in the previous section, the coherent dynamics $\hat{g}(s)$ are of high-order if the coherent group has generators with different turbine time constants. This suggests that substituting $\hat{g}(s)$ with an equivalent machine of the same order as each $g_i(s)$ may lead to substantial approximation error. In this section we propose instead a hierarchy of reduction models with increasing order, based on balanced realization theory~\cite{Zhou:1996:ROC:225507}, such that eventually an accurate reduction model is obtained as the order of the reduction increases. We further explore other avenues of improvement by applying the reduction methodology over the coherent dynamics itself, instead of the standard approach of applying a reduction only on the turbines~\cite{Germond1978,Guggilam2018,ourari2006dynamic}.

We use frequency weighted balanced truncation~\cite{SANGWOOKIM1995183} to approximate $\hat{g}(s)$. Frequency weighted balanced truncation identifies the most significant dynamics with respect to particular LTI frequency weight by computing the weighted Hankel singular values, which  decay fast in many cases, allowing us to accurately approximate high-order systems. Importantly, the reduction procedure favors approximation accuracy in certain frequency range specified by the weights. \ifthenelse{\boolean{archive}}{The detailed procedure of frequency weighted balanced truncation is shown in Appendix.\ref{app_freq_weight_bt}.}{Due to space constraints, we refer to \cite{min2019aggr} for the detailed procedure of frequency weighted balanced truncation.} Given a SISO proper transfer function $G(s)$, and a frequency weight $W(s)$ the $k$-th order weighted balanced truncation returns
\begin{equation}\label{eq:transfer}
    \Tilde{G}_k(s)=\frac{b_{k-1}s^{k-1}+\cdots+b_1s+b_0}{a_{k}s^{k}+\cdots+a_1s+a_0}\,,
\end{equation}
which is guaranteed to be stable~\cite{SANGWOOKIM1995183}, and such that the weighted error $\sup_{\eta\in \mathbb{R}}|W(j\eta)(G(j\eta)-\tilde G_k(j\eta))|$ is upper bounded, with an upper bound decreasing to zero with the order $k$. For our purposes, $W(s)$ must have high gain in the low frequency range, so that the DC gains of the original and reduced dynamics are approximately matched, i.e., $G(0)\simeq \tilde{G}(0)$. Our proposed two model reduction approaches for high-order $\hat{g}(s)$ in \eqref{eq_aggr_dym_sw_tb} are both based on frequency weighted balanced truncation.

\subsection{Model Reduction on Turbine Dynamics}\label{ssec:turbine-reduction}
Our first model is based on applying balanced truncation to the turbine aggregate. Essentially, $\hat{g}(s)$ in \eqref{eq_aggr_dym_sw_tb} is of high order because it has high-order turbine dynamics $\sum_{i=1}^n\frac{r_i^{-1}}{\tau_is+1}$; we seek to replace it with a reduced-order model.  This is akin to the existing literature ~\cite{Germond1978,Guggilam2018} which replaces an aggregate of turbines in parallel by a first order turbine model with parameters obtained by minimizing certain error functions.

We denote the aggregate turbine dynamics as
$
    \hat{g}_{t}(s):=\sum_{i=1}^n\frac{r_i^{-1}}{\tau_is+1}.
$
We also denote the $(k-1)$-th reduction model of $\hat{g}_{t}(s)$ by frequency-weighted balanced truncation as $\Tilde{g}_{t,k-1}(s)$. Then the $k$-th order reduction model of $\hat{g}(s)$ is given by
\be\label{eq_tf_bt_ndc}
    \Tilde{g}_{k}^{tb}(s)=\frac{1}{\hat{m}s+\hat{d}+\Tilde{g}_{t,k-1}(s)},
\ee
with, again, $\hat m =\sum_{i=1}^nm_i, \hat d=\sum_{i=1}^n d_i$. We highlight two special instances of relevance for our numerical illustration.

\subsubsection{2nd order reduction model}\label{ssec:turbine-k2}
When $k=2$, the reduced model $\Tilde{g}_{t,1}(s)$ can be interpreted as a first order turbine model  $$\Tilde{g}_{t,1}(s)=\frac{b_0}{a_1s+a_0}=\frac{b_0/a_0}{(a_1/a_0)s+1}:=\frac{\Tilde{r}^{-1}}{\Tilde{\tau}s+1}\,,$$ with parameters $(\Tilde{r}^{-1},\Tilde{\tau})$ chosen by the weighted balanced truncation method.
Then the overall reduction model $\Tilde{g}_2^{tb}(s)$ is second order, which is a single generator model.

Unlike \cite{Germond1978,Guggilam2018}, there is a DC gain mismatch between $\Tilde{g}_{2}^{tb}(s)$ and the original $\hat{g}(s)$ since $\Tilde{r}^{-1}\neq \hat{r}^{-1}=\sum_{i=1}^nr_i^{-1}$. Later in the simulation section, by choosing a proper frequency weight $W(s)$, we effectively make the DC gain mismatch negligible. Unfortunately, as we will see in the numerical section, $k=2$ may not suffice to accurately approximate the coherent dynamics.

\subsubsection{3rd order reduction model}\label{ssec:turbine-k3}
To obtain a more accurate reduced-order model, one may consider $k=3$ as the next suitable option. In fact, we see in the later numerical simulation, a 2nd order turbine model $\Tilde{g}_{t,2}(s)$, i.e., $k=3$, is sufficient to give an almost exact approximation of $\hat{g}_{t}(s)$.

We can also interpret $\Tilde{g}_{t,2}(s)$, by means of partial fraction expansion, i.e.,
\ben
    \Tilde{g}_{t,2}(s)=\frac{b_1s+b_0}{a_2s^2+a_1s+a_0}=\frac{\Tilde{r}_1^{-1}}{\Tilde{\tau}_1s+1}+\frac{\Tilde{r}_2^{-1}}{\Tilde{\tau}_2s+1},
\een
assuming the poles are real. Then the reduced dynamics $\Tilde{g}_{t,2}(s)$ can be viewed as two first order turbines in parallel with parameters $(\Tilde{r}_1^{-1},\Tilde{\tau}_1)$ and $(\Tilde{r}_2^{-1},\Tilde{\tau}_2)$. In Section \ref{subsec:eff_red_order}, we show such interpretation is valid for our numerical example.

\subsection{Model Reduction on Closed-loop Coherent Dynamics}\label{ssec:red_cl}
Our second proposal is: instead of reducing the turbine dynamics \eqref{eq_tf_bt_ndc}, to apply weighted balanced truncation directly on $\hat g(s)$. Thus, we denote $\Tilde{g}_k^{cl}(s)$ as the $k$-th order reduction model, via frequency weighted balanced truncation, of the coherent dynamics $\hat{g}(s)$. Again, DC gain mismatch can be made negligible by properly choosing $W(s)$.

As compared to  Section \ref{ssec:turbine-reduction}, the reduced model might not be easy to interpret in practice. Nevertheless, the procedure described below often leads to such an interpretation. 

\subsubsection{2nd order reduction model} When $k=2$, we wish to interpret $\Tilde{g}_2^{cl}(s)$ in terms of a single generator with a first order turbine of the form in  \eqref{eq_single_generator},  with parameters $(\Tilde{m},\Tilde{d},\Tilde{r}^{-1},\Tilde{\tau})$.
Given  $$\Tilde{g}_2^{cl}(s)=\frac{b_1s+b_0}{a_2s^2+a_1s+a_0}:=\frac{N(s)}{D(s)}\,,$$ obtained via the proposed method, we write the polynomial division $D(s) = Q(s)N(s)+R$, where $Q(s),R$ are quotient and remainder, respectively. This leads to the expression 
\ben
    \Tilde{g}_2^{cl}(s)=\frac{N(s)}{Q(s)N(s)+R}=\frac{1}{Q(s)+\frac{R}{N(s)}}\,.
\een
Here the first order polynomial $Q(s)$ can be matched to $\Tilde{m}s+\Tilde{d}$, and $\frac{R}{N(s)}$  to $\frac{\Tilde{r}^{-1}}{\Tilde{\tau}s+1}$. Provided the obtained constants $(\Tilde{m},\Tilde{d},\Tilde{r}^{-1},\Tilde{\tau})$ are positive, the interpretation follows.

\subsubsection{3rd order reduction model} 
Similarly, when $k=3$, the reduced model is $\Tilde{g}_3^{cl}(s)=\frac{N(s)}{D(s)}$, with $N(s)$ of 2nd order and $D(s)$ of 3rd order. The polynomial division $D(s) = Q(s)N(s)+R(s)$, still gives a first order quotient $Q(s)$, which  is interpreted as $\Tilde{m}s+\Tilde{d}$; the second order transfer function 
$\frac{R(s)}{N(s)}$ can be expressed, by partial fraction expansion, as two first order turbines in parallel, provided the obtained constants remain positive. We explore this in the examples studied below.

\section{Numerical Simulations}\label{sec_4_sim}

We now evaluate the reduction methodologies proposed in the previous section, and compare their performance with the solutions proposed in~\cite{Germond1978,Guggilam2018}. 
In our comparison, we consider 5 generators forming a coherent group\footnote{More specifically, we assume sufficiently strong network coupling among these generators such that the frequency responses are coherent. The numerical simulation will only illustrate the approximation accuracy with respect to the coherent response rather than individual ones.}. All parameters are expressed in a common base of 100 MVA.

\emph{The test case}: 5 generators, $\hat{m}=0.0683 (\text{s}^2/\text{rad})$, $\hat{d}= 0.0107$. The turbine and droop parameters of each generator are listed  in Table \ref{tb_droop_param_case1}. In all comparisons, a step change of $-0.1$ p.u. is used.

\begin{table}[!h]
\centering
\caption{Droop control parameters of generators in test case}
\begin{tabular}{l|lllll}
\hline
\diagbox[width=2.5cm,height=0.6cm]{Parameter}{Index}& 1      & 2      & 3      & 4      & 5      \\ \hline
droop $r_i^{-1}$ (p.u.) & 0.0218 & 0.0256 & 0.0236 & 0.0255 & 0.0192 \\ \hline
time constant $\tau_i$ (s)      & 9.08   & 5.26   & 2.29   & 7.97   & 3.24   \\ \hline
\end{tabular}

\label{tb_droop_param_case1}
\end{table}

\begin{rem}
    In the test case, we only aggregate 5 generators and report all  parameters explicitly in order to give more insights on how the distribution of time constant $\tau_i$ affects our approximations. It is worth noting that similar behavior is observed when reducing coherent groups with a much larger number of generators. In particular,  the accuracy found below with 3rd order reduced models is also observed in these higher order problems. 
\end{rem}

\ifthenelse{\boolean{archive}}
{
\subsection{DC Gain Mismatch Cancellation}
As mentioned in the previous section, one of the drawbacks of the balanced truncation method is that it does not match the DC gain of the original system, which leads to an error on the steady-state frequency. We illustrate this issue in  Fig. \ref{fig_dc_vs_ndc}, where we compare the step response of two 2nd order reduction models $\Tilde{g}_2^{tb}(s)$ using frequency weighted balanced truncation on the turbines, with different weights: 1) unweighted: $W_1(s)=1$; 2) weighted: $W_2(s)=\frac{s+3\cdot 10^{-2}}{s+10^{-4}}$. 

Fig. \ref{fig_dc_vs_ndc} compares step responses and Bode plots for the original coherent dynamics $\hat{g}(s)$ (solid gray) with those of reduced models (dotted and dashed lines).

\begin{figure}[ht]
    \centering
    \includegraphics[width=8cm]{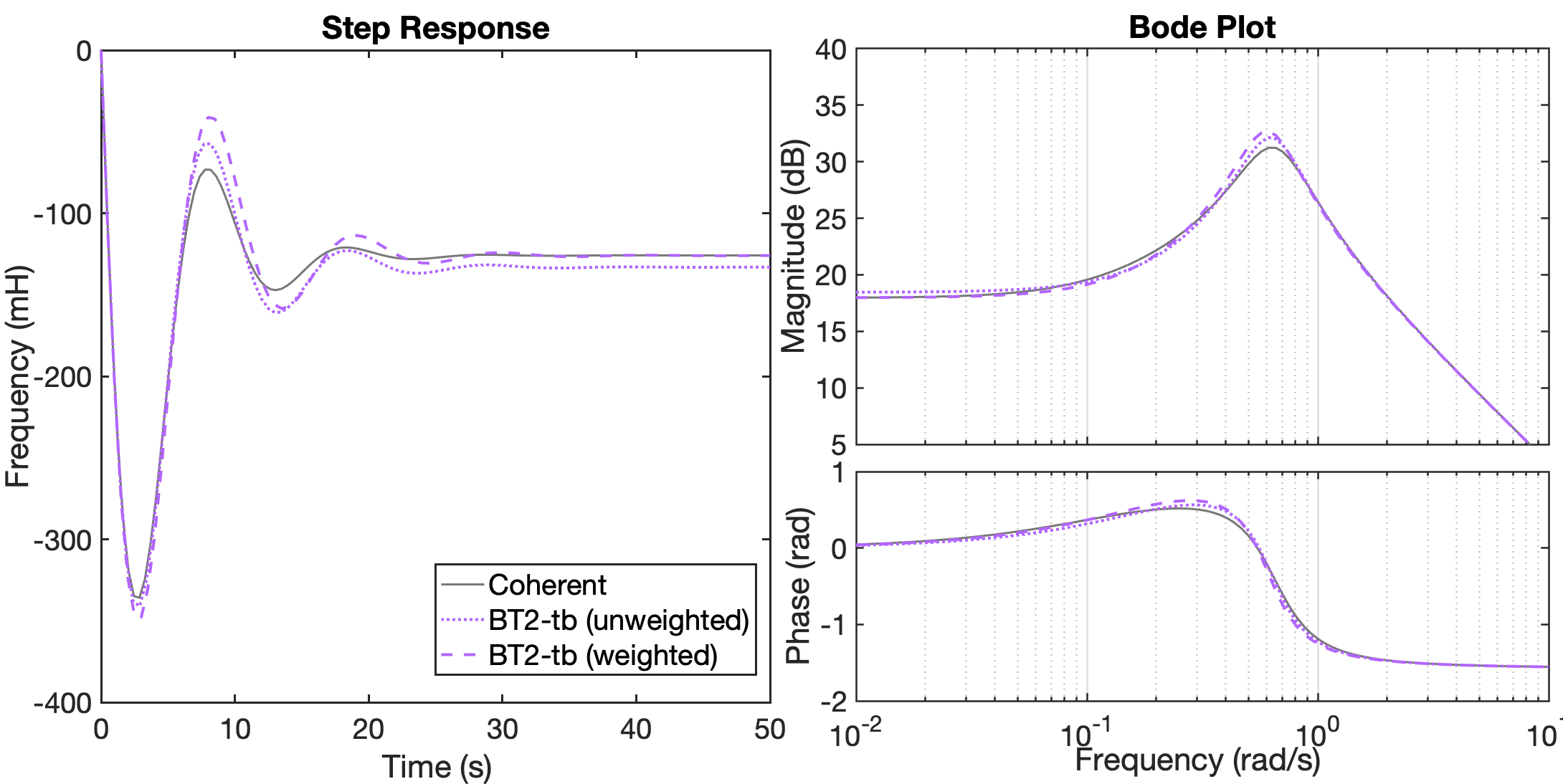}
    \caption{Second order models by balanced truncation on turbine dynamics with frequency weights $W_1(s)=1$ (unweighted) and  $W_2(s)=\frac{s+3\cdot 10^{-2}}{s+10^{-4}}$ (weighted).
    Step response (left) and Bode plot (right).}
    \label{fig_dc_vs_ndc}
\end{figure}

The DC gain mismatch is reflected in the steady state step response; we see that it is significantly reduced by  frequency weighted balanced truncation. However, it gives worse approximation to $\hat{g}(s)$ in the transient phase than the unweighted truncation. The Bode plot also reflects such a trade-off: the unweighted model has lower approximation error around the peak gain ($0.1-1$ rad/s) of $\hat{g}(s)$, at the cost of inaccuracies in the low frequency range ($<0.1$rad/s). The  weighted model exhibits exactly the opposite behavior, as the weight $W_2(s)=\frac{s+3\cdot 10^{-2}}{s+10^{-4}}$ puts more emphasis on low frequency ranges. 

As we will show in Section \ref{ssec:comparison}, neither can optimization-based approaches get rid of this trade-off. This suggests that a second order model is not sufficient to fully recover our coherent dynamics $\hat{g}(s)$. The main reason is that the time constants $\tau_i$ have wide spread: from $ {\raise.17ex\hbox{$\scriptstyle\sim$}}2$s to ${\raise.17ex\hbox{$\scriptstyle\sim$}}9$s. As the result, it is difficult to find a proper time constant $\Tilde{\tau}$ to account for both fast and slow turbines. The way to resolve it is approximating $\hat{g}(s)$ by higher-order reduced models.
}{
As mentioned in the previous section, one of the drawbacks of the balanced truncation method is the DC gain mismatch, which leads to a steady-state error. In our simulation, the DC gain mismatch is effectively cancelled by picking proper frequency weights for different reduced models. Due to space constraints, we refer to \cite{min2019aggr} for the comparison between reduced models with and without frequency weights.
}

\subsection{Effect of Reduction Order $k$ in Accuracy}\label{subsec:eff_red_order}
We now evaluate the effect of the order of the reduction in the accuracy.
That is, we compare 2nd and 3rd order balanced truncation on the turbine dynamics, $\Tilde{g}_2^{tb}(s)$ (BT2-tb), $\Tilde{g}_3^{tb}(s)$ (BT3-tb), as well as balanced truncation on the closed-loop coherent dynamics $\Tilde{g}_2^{cl}(s)$ (BT2-cl), $\Tilde{g}_3^{cl}(s)$ (BT3-cl). The  frequency weights are given by $W_{tb}(s)=\frac{s+3\cdot 10^{-2}}{s+10^{-4}}$  and $W_{cl}(s)=\frac{s+8\cdot 10^{-2}}{s+10^{-4}}$, respectively.
The step response and step response error with respect to $\hat{g}(s)$ are shown in Fig. \ref{fig_comparison_all_bt}.

\begin{figure}[ht]
    \centering
    \includegraphics[width=8.0cm,height=3.2cm]{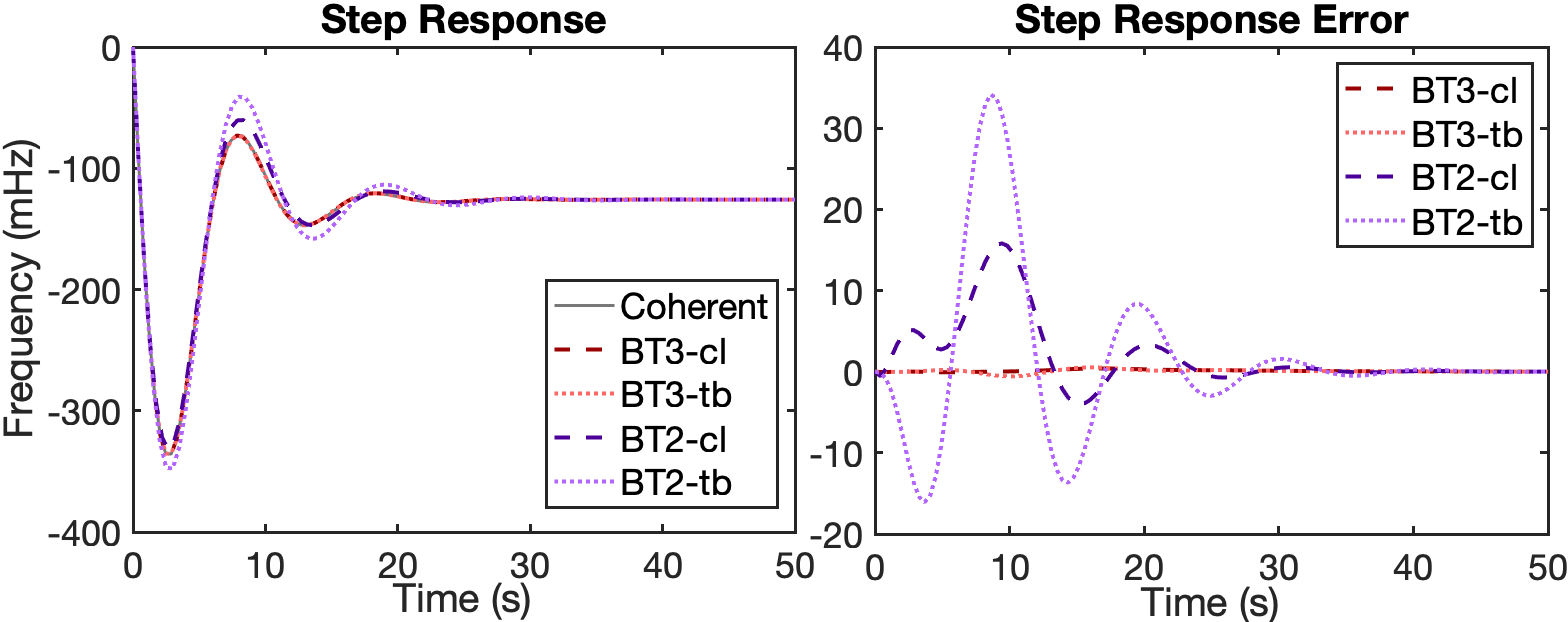}
    \caption{Comparison of all reduced-order models by balanced truncation}
    \label{fig_comparison_all_bt}
\end{figure}
Compared to 2nd order models, 3rd order reduced models give a very accurate approximation of $\hat{g}(s)$. While it is not surprising that approximation models with higher order ($k=3$) outperform models with lower order ($k=2)$, it is not trivial that a 3rd order model would provide this level of accuracy for an intrinsically high order system. 

Moreover, when we examine the transfer function given by $\Tilde{g}_3^{tb}(s)$ (from input $u$ in p.u. to output $w$ in rad/s), we find an interesting interpretation.
That is, the turbine model for $\Tilde{g}_3^{tb}(s)$ is given by 
\ben
    \Tilde{g}_{t,2}(s)=\frac{0.0266s+0.0057}{s^2+0.5046s+0.0489}=\frac{0.0473}{2.68s+1}+\frac{0.0684}{7.64s+1},
\een
where the latter is obtained by partial fraction expansion and can be viewed as two turbines (one fast turbine and one slow turbine) in parallel, and the choices of droop coefficients for these two turbines reflects the aggregate droop coefficients of fast turbines (generators 3 and 5) and slow turbines (generators 1,2, and 4), respectively, in $\hat{g}(s)$.

\subsection{Reduction on Turbines vs. Closed-loop Dynamics}

Another observation from Fig. \ref{fig_comparison_all_bt} is that reduction on the closed-loop is more accurate than reduction on the turbine. To get a more straightforward comparison, we list in Table \ref{tb_approx_err_comp} the approximation errors of all 4 models in Fig \ref{fig_comparison_all_bt} using the following metrics: 1) $\mathcal{L}_2$-norm of step response error\footnote{For reduced-order models obtained via frequency weighted balanced truncation, there exists an extremely small but non-zero DC gain mismatch that makes the $\mathcal{L}_2$-norm unbounded. We resolve this issue by simply scaling our reduced-order models to have exactly the same DC gain as $\hat{g}(s)$. } $e(t)$ (in $\mathrm{rad}/\mathrm{s}^{1/2}$): $(\int_{0}^{+\infty}|e(t)|^2dt)^{1/2}$; 2) $\mathcal{L}_\infty$-norm of $e(t)$ (in $\mathrm{rad}/\mathrm{s}$): $\max_{t\geq 0}|e(t)|$; 3) $\mathcal{H}_\infty$-norm difference between reduced and original models (from input $u$ in p.u. to output $w$ in rad/s).


\begin{table}[!h]
\centering
\caption{Approximation errors of reduced order models}
\begin{tabular}{l|lll}
\hline
\diagbox[width=2.4cm]{Model}{Metric}
  & \pbox{20cm}{$\mathcal{L}_2$ diff. \\ ($\mathrm{rad}/\mathrm{s}^{1/2}$)}& \pbox{20cm}{$\mathcal{L}_\infty$ diff. \\ ($\mathrm{rad}/\mathrm{s}$)}& $\mathcal{H}_\infty$ diff.       \\ \hline
Guggilam\cite{Guggilam2018}         & 7.2956               & 3.8287                     & 10.2748                           \\ \hline
Germond\cite{Germond1978}          & 3.9594                & 1.9974                     & 5.1431                           \\ \hline
BT2-tb        & 4.3737              & 2.1454                     & 7.5879                           \\ \hline
BT2-cl          & 2.0376                & 0.9934                     & 2.0381                           \\ \hline
BT3-tb   & 0.0967                & 0.0361                     & 0.1315                           \\ \hline
BT3-cl   & \textbf{0.0704}                & \textbf{0.0249}                    & \textbf{0.0317}                           \\ \hline
\end{tabular}
\label{tb_approx_err_comp}
\end{table}

We observe from Table \ref{tb_approx_err_comp} that for a given the reduction order, balanced truncation on the closed-loop dynamics ($\Tilde{g}_2^{cl}(s)$, $\Tilde{g}_3^{cl}(s)$) has smaller approximation error than balanced truncation on turbine dynamics ($\Tilde{g}_2^{tb}(s)$, $\Tilde{g}_3^{tb}(s)$) \emph{across all metrics}. Such observation seems to be true in general. For instance, Fig.~\ref{fig_ncl_vs_cl} shows a similar trend by plotting the same configuration (metrics and models) of Table \ref{tb_approx_err_comp} for different values of of the aggregate inertia $\hat{m}$, while keeping all other parameters the same. 
\begin{figure}[ht]
    \centering
    \includegraphics[width=8cm,height=3.2cm]{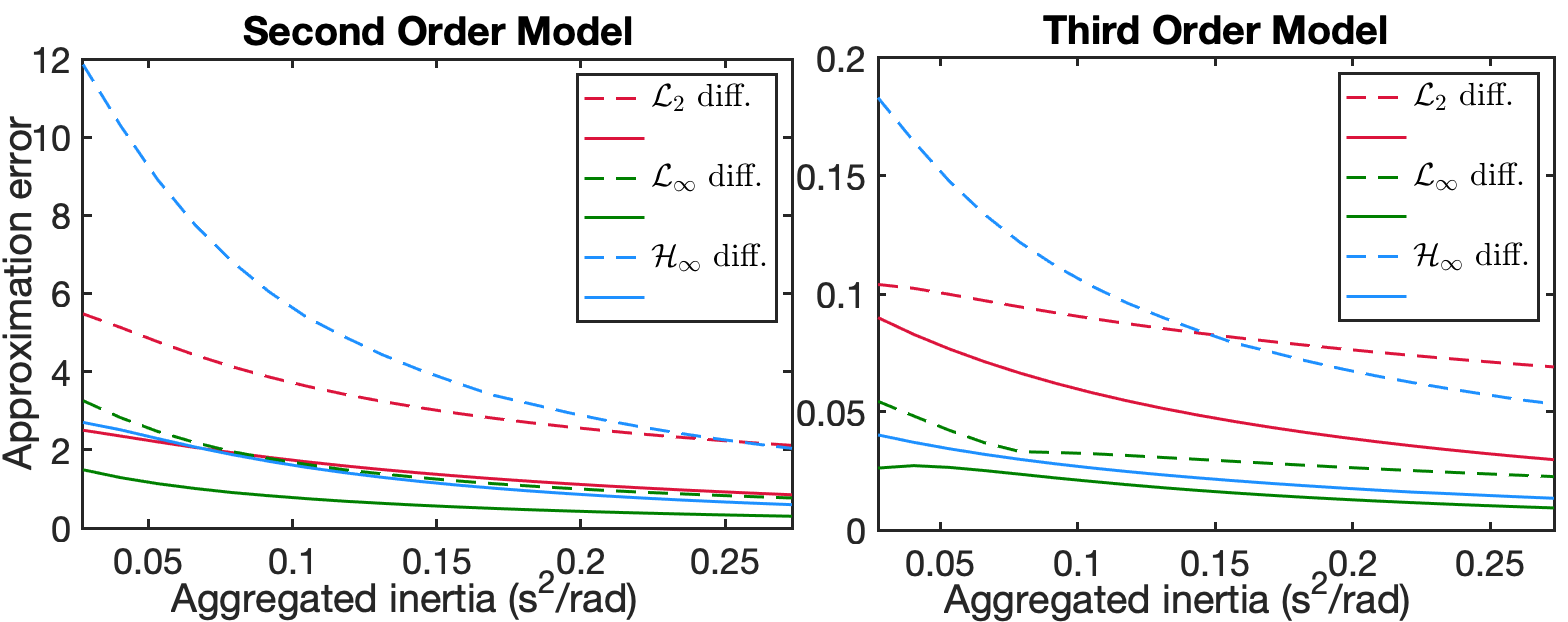}
    \caption{Approximation errors of second order models (left) and third order models (right) by balanced truncation in different metrics. Approximation errors of reduced-order models $\Tilde{g}_2^{tb}(s)$, $\Tilde{g}_3^{tb}(s)$ are shown in dashed lines; Approximation errors of reduced-order models $\Tilde{g}_2^{cl}(s)$, $\Tilde{g}_3^{cl}(s)$ are shown in solid lines. The approximation errors are in their respective unit.}
    \label{fig_ncl_vs_cl}
\end{figure}

It can be seen from Fig.~\ref{fig_ncl_vs_cl} that reduction on closed-loop dynamics improves the approximation in every metric, uniformly, for a wide range of aggregate inertia $\hat{m}$ values. 
The main reason is that, when applying reduction on closed-loop dynamics, the algorithm has the flexibility to choose the corresponding values of  inertia and damping to be  different from the aggregate ones in order to better approximate the response. More precisely,  from the reduced model we obtain
 \begin{align*}
     \Tilde{g}_{2}^{cl}(s)&=\;\frac{4.9733s+1}{(0.06715s+0.01464)(4.9733s+1)+0.1118}\,,
\end{align*}
from which we can get the equivalent swing and turbine model as
\begin{align*}
    \text{swing model:} &\; \frac{1}{0.06715s+0.01464},\,
    \text{turbine:} &\; \frac{0.1118}{4.9733s+1}.
\end{align*}
The equivalent inertia and damping are $\Tilde{m}=0.06715$ and $\Tilde{d}=0.01464$, which are different from the aggregate values $\hat{m},\hat{d}$. Therefore, when compared to reduction on turbine dynamics, reduction on closed-loop dynamics is essentially less constrained on the parameter space, thus achieving smaller approximation errors.

\subsection{Comparison with Existing Methods}\label{ssec:comparison}

Lastly, we compare reduced-order models via balanced truncation on the closed-loop dynamics, $\Tilde{g}_2^{cl}(s)$, $\Tilde{g}_3^{cl}(s)$, with the solutions proposed in~\cite{Germond1978,Guggilam2018}. The step responses and the approximation errors are shown in Fig.~\ref{fig_comparison_1} and Table.~\ref{tb_approx_err_comp}. 
\begin{figure}[ht]
    \centering
    \includegraphics[width=8.0cm,height=3.2cm]{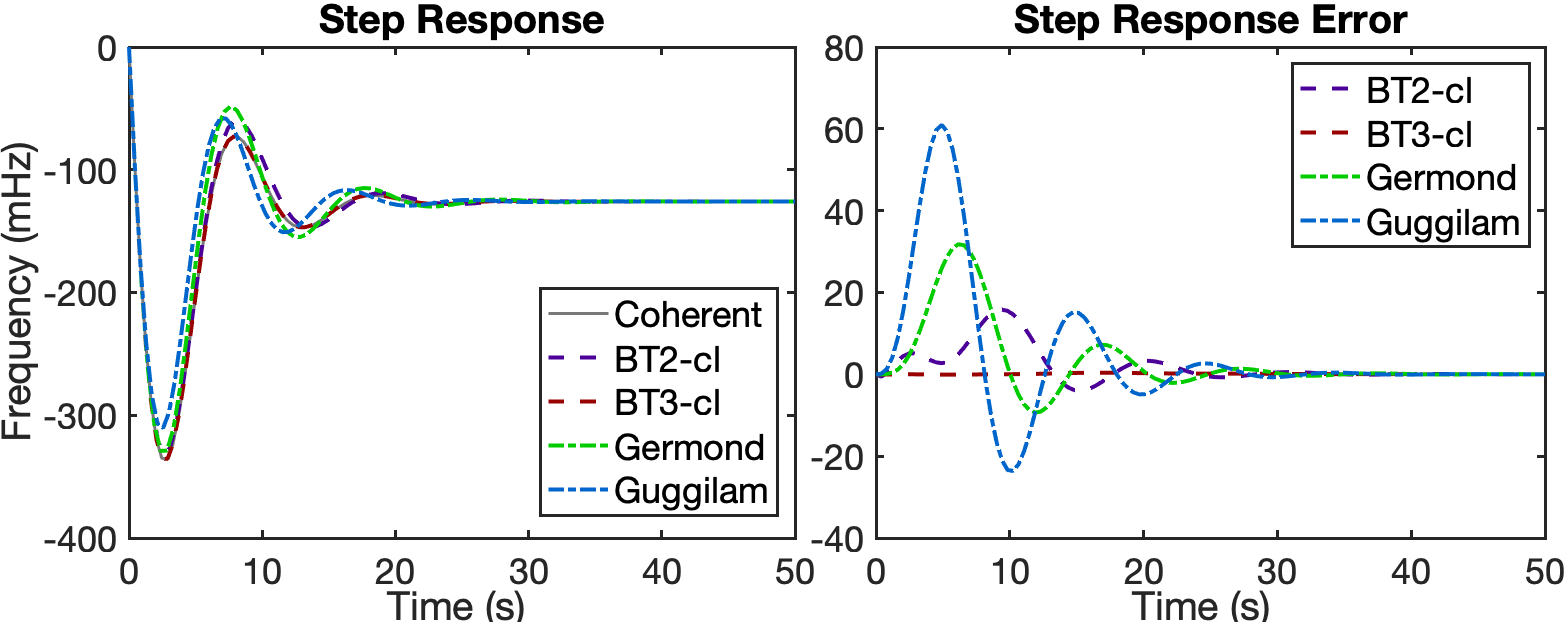}
    \caption{Comparison with existing reduced-order models}
    \label{fig_comparison_1}
\end{figure}

In the comparison, $\Tilde{g}_3^{cl}(s)$ outperforms all other reduced-order models and it is the most \textbf{accurate reduced-order model} of $\hat{g}(s)$. It is also worth noting that $\Tilde{g}_2^{cl}(s)$ has the least approximation error among all 2nd order models. In general, such results suggest us that to improve the accuracy in reduced-order models of coherent dynamics of generators $\hat{g}(s)$, we should consider: 1) increasing the complexity (order) of the reduction model; 2) reduction on closed-loop dynamics instead of on turbine dynamics. 

\section{Conclusion and Future Work}
This paper concerns tractable models for frequency dynamics in the power grid, starting with  the characterization  $\hat{g}(s)=\lp\sum_{i=1}^ng_i^{-1}(s)\rp^{-1}$ for the coherent response, which is shown to be asymptotically accurate as the coupling between generators (characterized via $\lambda_2(L)$) increases. Our characterization justifies existing aggregation approaches and also explains the difficulties of aggregating generators with heterogeneous turbine time constants. We leverage model reduction tools from control theory to find accurate reduced-order approximations to $\hat{g}(s)$. For $\{g_i(s)\}_{i=1}^n$ given by the 2nd order generator models, the numerical study shows that 3rd order models based on frequency weighted balanced truncation on closed-loop dynamics are sufficient to accurately represent $\hat{g}(s)$.

\color{black}
There are many possible directions of further inquiry. First, for situations of weaker coherency, we have seen that in Fig.\ref{fig_approx_to_coi}, $\hat{g}(s)$ well approximates the response of the CoI frequency, and in fact the approximation is exact if generator transfer functions $\{g_i(s)\}_{i=1}^n$ are proportional to each other~\cite{Paganini2019tac}. An interesting question is to bound the approximation error when proportionality fails. 


A second topic of future research is experimentation with higher-order $g_i(s)$ \cite{Germond1978}. These arise due to more detailed models of turbine dynamics, or to the presence of advanced  droop controllers~\cite{jiang2017performance,jiang2020grid}. Classical aggregation strategies are complicated in this setting, but our model reduction program is in principle applicable and can help validate the need for such level of modeling detail.

\color{black}

\ifthenelse{\boolean{archive}}{
\subsection{Proof of the Theorem \ref{thm_unifm_conv}}
To proof the theorem, we need to present two lemmas first.
\begin{lem}\label{lem_bd_norm_mat_inv}
    Let A,B be matrices of order $n$. For \textbf{increasingly} ordered singular values $\sigma_i(A),\sigma_i(B)$, if $\sigma_1(A)\geq \sigma_n(B)$, then the following inequality holds:
    \ben
        \|(A+B)^{-1}\| \leq \frac{1}{\sigma_1(A)-\sigma_n(B)}=\frac{1}{\sigma_1(A)-\|B\|}
    \een
\end{lem}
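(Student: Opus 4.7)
The plan is to turn the operator-norm bound on $(A+B)^{-1}$ into a bound on the smallest singular value of $A+B$, and then control that smallest singular value by a direct triangle-inequality argument.

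First I would recall the identity
\ben
    \|(A+B)^{-1}\| \;=\; \frac{1}{\sigma_{\min}(A+B)}\,,
\een
valid whenever $A+B$ is invertible, together with the variational characterization $\sigma_{\min}(M) = \min_{\|x\|=1}\|Mx\|$. Under the paper's convention of increasing ordering, $\sigma_{\min}(A) = \sigma_1(A)$ and $\|B\| = \sigma_n(B)$, so the claim reduces to showing $\sigma_{\min}(A+B) \geq \sigma_{\min}(A) - \|B\|$ and checking that $A+B$ is nonsingular.

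Next, for an arbitrary unit vector $x\in\mathbb{R}^n$ I would apply the reverse triangle inequality:
\ben
    \|(A+B)x\| \;\geq\; \|Ax\| - \|Bx\| \;\geq\; \sigma_{\min}(A) - \|B\|\,.
\een
The hypothesis $\sigma_1(A) \geq \sigma_n(B)$, i.e.\ $\sigma_{\min}(A)\geq\|B\|$, ensures the right-hand side is nonnegative. Taking the minimum over unit $x$ gives $\sigma_{\min}(A+B) \geq \sigma_{\min}(A) - \|B\|$. In the strict case $\sigma_{\min}(A) > \|B\|$, this also shows $A+B$ has a trivial kernel and is therefore invertible (the boundary case $\sigma_{\min}(A)=\|B\|$ makes the asserted upper bound $1/0=+\infty$, which holds vacuously). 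Taking reciprocals yields the stated inequality.

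There is no real obstacle here: the result is essentially a Weyl-type perturbation bound for singular values, and the triangle-inequality route avoids invoking Weyl's inequalities directly. The only minor subtlety is bookkeeping the ordering convention (singular values increasing rather than the more common decreasing ordering), and noting that the strict form of the hypothesis is what is actually used downstream in Theorem~\ref{thm_unifm_conv}, where $A$ will carry a growing factor of $\lambda_2(L)$ and $B$ will remain bounded, so that $\sigma_{\min}(A)-\|B\|\to\infty$.
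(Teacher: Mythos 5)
Your proof is correct, and it reaches the same key inequality $\sigma_1(A+B)\geq \sigma_1(A)-\sigma_n(B)$ as the paper, but by a different route: the paper simply cites the Weyl-type singular value perturbation bound from Horn and Johnson ($\sigma_1(A)\leq\sigma_1(A+B)+\sigma_n(-B)$) and rearranges, whereas you derive that bound from scratch via the variational characterization $\sigma_{\min}(M)=\min_{\|x\|=1}\|Mx\|$ and the reverse triangle inequality. Your argument is more elementary and self-contained, and it has the added benefit of making the invertibility of $A+B$ explicit in the strict case $\sigma_1(A)>\sigma_n(B)$ (a point the paper leaves implicit); the paper's version is shorter because it outsources the perturbation inequality to a standard reference. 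Both handle the ordering convention correctly, and your observation that only the strict inequality is used downstream (since $\lambda_2(L)\to\infty$ in Lemma 2 and Theorem 1) matches how the lemma is actually applied.
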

\begin{proof}
    By~\cite[3.3.16]{Horn:2012:MA:2422911}, we have:
    \ben
        \sigma_{1}(A)\leq \sigma_{1}(A+B)+\sigma_n(-B)\,.
    \een
    Then as long as $\sigma_1(A)\geq \sigma_n(B)$, the following holds
    \ben
        \frac{1}{\sigma_1(A+B)}\leq \frac{1}{\sigma_1(A)-\sigma_n(B)}\,,
    \een
    and notice that the left-hand side is exactly $\|(A+B)^{-1}\|$.
\end{proof}

\begin{lem}\label{lem_reg_norm_bd}
    Let $\hat{g}(s),T(s)$ be defined in \eqref{eq_aggr_dym} and \eqref{eq_T_explicit}. Define $\bar{g}(s):=n\hat{g}(s)$. Suppose for $s_0\in\compl$, we have $|\bar{g}(s_0)|\leq M_1$ and $\max_{1\leq i\leq n}|g_i^{-1}(s_0)|\leq M_2$ for some $M_1,M_2>0$. Then for large enough $\lambda_2(L)$, the following inequality holds:
    \begin{align}
        &\;\lV T(s_0)-\frac{1}{n}\bar{g}(s_0)\one\one^T\rV\nonumber\\
        \leq &\; \frac{M_1^2M_2^2+2M_1M_2+\frac{M_1M_2^2}{|\lambda_2(L)/s_0|-M_2}}{|\lambda_2(L)/s_0|-M_2-M_1M_2^2}+\frac{1}{|\lambda_2(L)/s_0|-M_2}\,.\label{eq_T_norm_bd}
    \end{align}
\end{lem}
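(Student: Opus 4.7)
My plan is to prove this by changing basis so that the direction $\mathbf{1}$ is singled out, and then applying block-matrix inversion. The motivation is that $\frac{1}{n}\bar{g}(s_0)\one\one^T = \hat{g}(s_0)\one\one^T$ is rank one, supported exactly on the kernel of $L$. By isolating that direction, the complementary subspace sees the Laplacian's bulk eigenvalues $\lambda_2,\dots,\lambda_n$, which is where $\lambda_2(L)$ enters the estimate.

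Concretely, let $U = [\one/\sqrt n,\ V_\perp]$ be unitary with $V_\perp$ an orthonormal basis of $\one^\perp$. Then $U^\top(L/s_0)U = \mathrm{diag}(0,\Lambda_\perp/s_0)$ with $\Lambda_\perp = \mathrm{diag}(\lambda_2,\dots,\lambda_n)$, while
\[
U^\top G(s_0)^{-1}U = \begin{pmatrix} \alpha & \beta^\top\\ \beta & D\end{pmatrix},\qquad \alpha = \tfrac{1}{n}\one^\top G(s_0)^{-1}\one = \tfrac{1}{\bar g(s_0)},
\]
with $\|\beta\|,\|D\|\le \|G(s_0)^{-1}\|\le M_2$ and $|\alpha|\ge 1/M_1$. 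In the same basis, $\frac{1}{n}\bar g(s_0)\one\one^\top$ becomes $\bar g\cdot\mathrm{diag}(1,0)$, so the matrix whose norm we wish to bound is, by block inversion with Schur complement $S = \alpha-\beta^\top D_L^{-1}\beta$ of $D_L:=D+\Lambda_\perp/s_0$,
\[
U^\top\!\bigl(T(s_0)-\tfrac{1}{n}\bar g(s_0)\one\one^\top\bigr)U = \begin{pmatrix} S^{-1}-\bar g & -S^{-1}\beta^\top D_L^{-1}\\ -D_L^{-1}\beta\, S^{-1} & D_L^{-1}+D_L^{-1}\beta\, S^{-1}\beta^\top D_L^{-1}\end{pmatrix}.
\]

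I would then bound each block. First, Lemma \ref{lem_bd_norm_mat_inv} applied with $A=\Lambda_\perp/s_0$ (whose smallest singular value is $\lambda_2(L)/|s_0|$) and $B=D$ gives $\|D_L^{-1}\|\le \bigl(|\lambda_2(L)/s_0|-M_2\bigr)^{-1}=:\epsilon_L$, yielding both the stand-alone $\epsilon_L$ term and all factors $\|D_L^{-1}\beta\|\le M_2\epsilon_L$. Second, $|S-\alpha|\le M_2^2\epsilon_L$ together with $|\alpha|\ge 1/M_1$ gives $|S|\ge 1/M_1 - M_2^2\epsilon_L$ and
\[
|S^{-1}-\bar g| \;=\;\Bigl|\tfrac{\alpha - S}{\alpha S}\Bigr|\;\le\;\frac{M_1^2 M_2^2\epsilon_L}{1-M_1 M_2^2\epsilon_L}
\]
for $\lambda_2(L)$ large enough that $M_1 M_2^2\epsilon_L<1$. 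The off-diagonal and bottom-right corrections are bounded by $M_1 M_2\epsilon_L/(1-M_1M_2^2\epsilon_L)$ and $M_1 M_2^2\epsilon_L^2/(1-M_1M_2^2\epsilon_L)$ respectively, through the same $|S|$ lower bound.

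Finally, I would estimate the block-matrix norm by the sum of the four block norms (the elementary triangle inequality in block form), collect all four contributions, and simplify. Writing $\epsilon_L/(1-M_1M_2^2\epsilon_L)=1/(|\lambda_2(L)/s_0|-M_2-M_1M_2^2)$ produces exactly \eqref{eq_T_norm_bd}. The main obstacle is not conceptual but bookkeeping: one must verify that $\lambda_2(L)$ is large enough for $|S|$ to be bounded away from zero and for the Schur-complement inversion to be valid, then keep track of the four positive terms so that their sum matches the stated numerator $M_1^2M_2^2+2M_1M_2+M_1M_2^2/(|\lambda_2/s_0|-M_2)$. Unitary invariance of the spectral norm guarantees the bound transfers back to the original basis, completing the argument.
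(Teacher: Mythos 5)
Your proposal is correct and follows essentially the same route as the paper's proof: conjugating by $[\one/\sqrt{n},\ V_\perp]$, inverting the resulting $2\times 2$ block matrix via the Schur complement (your $S^{-1}$ is the paper's scalar $a$, your $D_L$ its $H_{22}$, your $\beta$ its $h_{12}$), bounding $\|D_L^{-1}\|$ with Lemma \ref{lem_bd_norm_mat_inv}, and summing the four block norms. The individual block estimates you give reproduce the paper's term by term, so only the bookkeeping you already flagged remains.
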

\begin{proof}
    Since $L$ is symmetric Laplacian matrix, the decomposition of $L$ is given by:
    \begin{equation*}\label{laplcian_decomp}
    L=V\Lambda V^T\,,
    \end{equation*} where $V=[\frac{\one_n}{\sqrt{n}},V_\perp]$, $VV^T=V^TV=I_n$, and $\Lambda=\mathrm{diag}\{\lambda_i(L)\}$ with $0=\lambda_1(L)\leq\lambda_2(L)\leq \cdots\leq \lambda_n(L)$.
    
    For the transfer matrix $T(s)$, we have:
    \begin{align*}
        T(s)&=\; (I_n+\mathrm{diag}\{g_i(s)\}L/s)^{-1}\mathrm{diag}\{g_i(s)\}\nonumber\\
        &=\; (\mathrm{diag}\{g^{-1}_i(s)\}+L/s)^{-1}\nonumber\\
        &=\; (\mathrm{diag}\{g^{-1}_i(s)\}+V(\Lambda/s) V^T)^{-1}\nonumber\\
        &=\; V(V^T\mathrm{diag}\{g^{-1}_i(s)\}V+\Lambda/s)^{-1}V^T\,.
    \end{align*}
    Let $H=V^T\dg\{g_i^{-1}(s_0)\}V+\Lambda/s_0$, then it's easy to see that:
    \begin{align}
        \lV T(s_0)-\frac{1}{n}\bar{g}(s_0)\one_n\one_n^T\rV&=\; \|T(s_0)-\bar{g}(s_0)Ve_1e_1^TV^T\|\nonumber\\
        &=\; \lV V\lp H^{-1}-\bar{g}(s_0)e_1e_1^T\rp V^T\rV\nonumber\\
        &=\; \lV H^{-1}-\bar{g}(s_0)e_1e_1^T \rV\,,\label{eq_T_H_norm_equiv}
    \end{align}
    where $e_1$ is the first column of identity matrix $I_n$.
    
    We write $H$ in block matrix form:
    \begin{align*}
        H&=\;V^T\dg\{g_i^{-1}(s_0)\}V+\Lambda/s_0\\
        &= \bmt
            \frac{\one_n^T}{\sqrt{n}}\\
            V_\perp^T
        \emt \dg\{g_i^{-1}(s_0)\} \bmt
        \frac{\one_n}{\sqrt{n}}& V_\perp\emt+\Lambda/s_0\\
        &= \bmt
        \bar{g}^{-1}(s_0)& \frac{\one_n^T}{\sqrt{n}}\dg\{g_i^{-1}(s_0)\}V_\perp\\
        V_\perp^T\dg\{g_i^{-1}(s_0)\}\frac{\one_n}{\sqrt{n}} &V_\perp^T\dg\{g_i^{-1}(s_0)\}V_\perp+\Tilde{\Lambda}/s_0
        \emt\\
        &:= \bmt
        \bar{g}^{-1}(s_0)& h^T_{12}\\
        h_{12} & H_{22}
        \emt\,,
    \end{align*}
    where $\Tilde{\Lambda}=\dg\{\lambda_2(L),\cdots,\lambda_n(L)\}$.
    
    Invert $H$ in its block form, we have:
    \ben
        H^{-1} = \bmt
        a &-ah_{12}^TH_{22}^{-1}\\
        -aH_{22}^{-1}h_{12}& H_{22}^{-1}+aH_{22}^{-1}h_{12}h_{12}^TH_{22}^{-1}
        \emt\,,
    \een
    where $a = \frac{1}{\bar{g}^{-1}(s_0)-h_{12}^TH_{22}^{-1}h_{12}}$.
    
    Notice that $\|\one_n\|=\sqrt{n}$ and $\|V_\perp\|=1$, we have
    \be
        \|h_{12}\|\leq \frac{\|\one_n\|}{\sqrt{n}}\|\dg\{g_i^{-1}(s_0)\}\|\|V_\perp\|\leq M_2\,,\label{eq_h12_norm_bd}
    \ee
    by the compatibility between vector and matrix 2-norm, along with that matrix 2-norm is sub-multiplicative.
    Additionally, by Lemma \ref{lem_bd_norm_mat_inv}, when $|\lambda_2(L)/s_0|>M_2$, the following holds:
    \begin{align}
        \|H_{22}^{-1}\|&\leq\; \frac{1}{\sigma_1(\Tilde{\Lambda})-\|V_\perp^T\dg\{g_i^{-1}(s_0)\}V_\perp\|}\nonumber\\
        &\leq\; \frac{1}{|\lambda_2(L)/s_0|-M_2}\,.\label{eq_H22_norm_bd}
    \end{align}
    Lastly, when $|\lambda_2(L)/s_0|>M_2+M_2^2M_1$, by \eqref{eq_h12_norm_bd}\eqref{eq_H22_norm_bd}, we have:
    \begin{align}
        |a|&\leq\; \frac{1}{|\bar{g}^{-1}(s_0)|-\|h_{12}\|^2\|\|H_{22}^{-1}\|}\nonumber\\
        &\leq \; \frac{(|\lambda_2(L)/s_0|-M_2)M_1}{|\lambda_2(L)/s_0|-M_2-M_1M_2^2}\,.\label{eq_a_norm_bd}
    \end{align}
    
    Now we bound the norm of $H^{-1}-\bar{g}(s_0)e_1e_1^T$ by the sum of norms of all its blocks:
    \begin{align}
        &\;\|H^{-1}-\bar{g}(s_0)e_1e_1^T\|\nonumber\\
        =&\; \lV \bmt
        a\bar{g}(s_0)h_{12}^TH_{22}^{-1}h_{12} &-ah_{12}^TH_{22}^{-1}\\
        -aH_{22}^{-1}h_{12}& H_{22}^{-1}+aH_{22}^{-1}h_{12}h_{12}^TH_{22}^{-1}
        \emt\rV\nonumber\\
        \leq &\; |a|\|H_{22}^{-1}\|(|\bar{g}(s_0)|\|h_{12}\|^2+2\|h_{12}\|+\|h_{12}\|^2\|H_{22}^{-1}\|)\nonumber\\
        &\;\quad\quad +\|H_{22}^{-1}\|\,.\label{eq_Hinv_norm_bd1}
    \end{align}
    By \eqref{eq_h12_norm_bd}\eqref{eq_H22_norm_bd}\eqref{eq_a_norm_bd}, we have the following:
    \begin{align}
        &\;\|H^{-1}-\bar{g}(s_0)e_1e_1^T\|\nonumber\\
        \leq &\; \frac{M_1^2M_2^2+2M_1M_2+\frac{M_1M_2^2}{|\lambda_2(L)/s_0|-M_2}}{|\lambda_2(L)/s_0|-M_2-M_1M_2^2}+\frac{1}{|\lambda_2(L)/s_0|-M_2}\,.\label{eq_Hinv_norm_bd2}
    \end{align}
    This bound holds as long as $|\lambda_2(L)/s_0|>M_2+M_2^2M_1$, and combining \eqref{eq_T_H_norm_equiv}\eqref{eq_Hinv_norm_bd2} gives the desired inequality.
\end{proof}
Now we can proof theorem \ref{thm_unifm_conv}, we recite the theorem before the proof:

\begin{usethmcounterof}{thm_unifm_conv}
    Given the assumptions in Section \ref{ssec:coherence}, the following holds for any $\eta_0>0$:
    \ben
        \lim_{\lambda_2(L)\ra +\infty}\sup_{\eta\in[-\eta_0,\eta_0]}\lV T(j\eta)-\hat{g}(j\eta)\one\one^T\rV=0\,,
    \een
    where $j=\sqrt{-1}$ and $\one\in\mathbb{R}^n$ is the vector of all ones.
\end{usethmcounterof}
\begin{proof}
    $\bar{g}(s)$ is stable because $\hat{g}(s)$ is stable, then $\bar{g}(s)$ is continuous on compact set $[-j\eta_0,j\eta_0]$. Then by~\cite[Theorem 4.15]{Rudin1964} there exists $M_1>0$, such that $\forall s\in[-j\eta_0,j\eta_0]$, we have $|\bar{g}(s)|\leq M_1$. Similarly, because all $g_i(s)$ are minimum-phase, all $g_i^{-1}(s)$ are stable hence continuous on $[-j\eta_0,j\eta_0]$. Again there exists $M_2>0$, such that $\forall s\in[-j\eta_0,j\eta_0]$, we have $\max_{1\leq i\leq n}|g_i^{-1}(s)|\leq M_2$.
    
    Now we know that $\forall s\in [-j\eta_0,j\eta_0] $, we have $|\bar{g}(s)|\leq M_1,\max_{1\leq i\leq n}|g_i^{-1}(s)|\leq M_2$, i.e. the condition for Lemma~\ref{lem_reg_norm_bd} is satisfied for a common choice of $M_1,M_2>0$.
    
    By Lemma~\ref{lem_reg_norm_bd}, $\forall s\in[-j\eta_0,j\eta_0]$, we have:
    \begin{align*}
        &\;\lV T(s)-\hat{g}(s)\one\one^T\rV\nonumber\\
        \leq &\; \frac{M_1^2M_2^2+2M_1M_2+\frac{M_1M_2^2}{|\lambda_2(L)/s|-M_2}}{|\lambda_2(L)/s|-M_2-M_1M_2^2}+\frac{1}{|\lambda_2(L)/s|-M_2}.
    \end{align*}
    Taking $\sup_{s\in [-j\eta_0,j\eta_0]}$ on both sides gives:
    \begin{align*}
        &\;\sup_{s\in [-j\eta_0,j\eta_0]}\lV T(s)-\hat{g}(s)\one\one^T\rV\nonumber\\
        \leq &\; \frac{M_1^2M_2^2+2M_1M_2+\frac{M_1M_2^2}{|\lambda_2(L)|/\eta_0-M_2}}{|\lambda_2(L)|/\eta_0-M_2-M_1M_2^2}+\frac{1}{|\lambda_2(L)|/\eta_0-M_2}.
    \end{align*}
    Lastly, take $\lambda_2(L)\ra +\infty$ on both sides, the right-hand side gives $0$ in the limit, which finishes the proof.
\end{proof}
\subsection{Frequency Weighted balanced Truncation}\label{app_freq_weight_bt}

Given a minimum realization of frequency weight $W(s)$ to be $(A_W,B_W,C_W,D_W)$, the procedures of frequency weighted balanced truncation for a minimum, strictly proper and stable linear system $(A,B,C)$ with order $n$ are given as follow:
\begin{enumerate}
    \item The extended system\footnote{When $W(s)=1$, the extended system is exactly the same as original $(A,B,C)$, then the procedures give unweighted standard balanced truncation.} is given by:
    \ben
        \left[
        \begin{array}{cc|c}
             A&\mathbb{0}&B  \\
             B_WC&A_W&\mathbb{0}\\\hline
             D_WC&C_W&\mathbb{0}
        \end{array}
        \right]:=\left[\arraycolsep=1.4pt\def\arraystretch{1.3}
        \begin{array}{c|c}
             \bar{A}&\bar{B}  \\\hline
             \bar{C}&\mathbb{0}
        \end{array}
        \right].
    \een
    \item Compute the frequency weighted controllability and observability gramians $X_c,Y_o$ from the gramians $\bar{X}_c,\bar{Y}_o$ of extended system:
    \ben
        \bar{X}_c=\int_{0}^{\infty}e^{\bar{A}t}\bar{B}\bar{B}^Te^{\bar{A}^Tt}dt,\ \bar{Y}_o = \int_{0}^{\infty} e^{\bar{A}^Tt}\bar{C}^T\bar{C}e^{\bar{A}t}dt
    \een
    \ben
        X_c = \begin{bmatrix}
            I_{n}& \mathbb{0}
        \end{bmatrix}\bar{X_c}\begin{bmatrix}
            I_{n}\\ \mathbb{0}
        \end{bmatrix},\ Y_c = \begin{bmatrix}
            I_{n}& \mathbb{0}
        \end{bmatrix}\bar{Y_c}\begin{bmatrix}
            I_{n}\\ \mathbb{0}
        \end{bmatrix}\,.
    \een
    \item Perform the singular value decomposition of $X_c^{\frac{1}{2}}Y_oX_c^{\frac{1}{2}}$:
    \ben
        X_c^{\frac{1}{2}}Y_oX_c^{\frac{1}{2}} = U\Sigma U^*\,.
    \een
    where $U$ is unitary and $\Sigma$ is diagonal, positive definite with its diagonal terms in decreasing order. Then compute the change of coordinates $T$ given by:
    \ben
        T^{-1} = X_c^{\frac{1}{2}}U\Sigma^{-1}\,.
    \een
    \item Apply change of coordinates $T$ on $(A,B,C)$ to get its balanced realization $(TAT^{-1},TB,CT^{-1})$. Then the $k$-th order $(1\leq k\leq n)$ reduction model $(A_k,B_k,C_k)$ is given by truncating $(TAT^{-1},TB,CT^{-1})$ as the following:
    \begin{align*}
        &\;A_k = \begin{bmatrix}
            I_{k}& \mathbb{0}
        \end{bmatrix}TAT^{-1}\begin{bmatrix}
            I_{k}\\ \mathbb{0}
        \end{bmatrix}\\
        &\;B_k = \begin{bmatrix}
            I_{k}& \mathbb{0}
        \end{bmatrix}TB\\
        &\; C_k =CT^{-1}\begin{bmatrix}
            I_{k}\\ \mathbb{0}
        \end{bmatrix}\,.
    \end{align*}
    
\end{enumerate}
\begin{rem}
    Balanced truncation only applies to systems in state space. For a transfer function, one should apply balanced truncation to its minimum realization, then obtain reduced order transfer function from the state-space reduction model. 
\end{rem}
}{}

\bibliographystyle{IEEEtran}
\bibliography{ref.bib}

\balance
\end{document}